\numberwithin{figure}{section}   
\numberwithin{table}{section}   
\numberwithin{equation}{section}  
\numberwithin{algorithm}{section}   
\theoremstyle{plain}
\newtheorem{thm}{Theorem}[section]
\newtheorem{lem}[thm]{Lemma}
\newtheorem{cor}[thm]{Corollary}
\theoremstyle{definition}
\newtheorem{ex}[thm]{Example}
\newtheorem{rem}[thm]{Remark}
\newtheorem{defn}[thm]{Definition}
\definecolor{red}{rgb}{.8,0,0}
\definecolor{bblu}{rgb}{0,0,1}
\definecolor{gray}{rgb}{.4,.4,.4}
\definecolor{gre}{rgb}{0,.7,0}
\newcommand{\Z}{{\mathbb Z}}
\newcommand{\C}{{\mathbb C}}
\newcommand{\powerset}{\raisebox{.15\baselineskip}{\Large\ensuremath{\wp}}}
\newcommand{\F}{\mathcal F}
\newcommand{\G}{\mathcal G}
\newcommand{\bit}{\begin{itemize}}
\newcommand{\eit}{\end{itemize}}
\newcommand{\ben}{\begin{enumerate}}
\newcommand{\een}{\end{enumerate}}
\newcommand{\beq}{\begin{equation}}
\newcommand{\eeq}{\end{equation}}
\newcommand{\bea}{\begin{eqnarray*}}
\newcommand{\eea}{\end{eqnarray*}}
\newcommand{\bpf}{\begin{proof}}
\newcommand{\epf}{\end{proof}}
\newcommand{\lp}{\!\left(}
\newcommand{\rp}{\right)}
\DeclareMathOperator*{\rank}{rank}
\DeclareMathOperator*{\dsum}{\bigoplus}
\begin{document}
% \linenumbers

\title{Symbolic Listings as Computation}
\author{Hamilton Sawczuk, Edinah Gnang \thanks{Department of Applied Mathematics and Statistics, Johns Hopkins University, Baltimore, MD 21218, USA (egnang1@jhu.edu)}} 

\maketitle\vspace{-15pt}

\begin{abstract} 
We propose an algebraic model of computation which formally relates symbolic listings, complexity of Boolean functions, and low depth arithmetic circuit complexity. In this model, algorithms are arithmetic formulae expressing symbolic listings of YES instances of Boolean functions, and computation is executed via partial differential operators. We consider the Chow rank of an arithmetic formula as a measure of complexity and establish the Chow rank of multilinear polynomials with totally non-overlapping monomial support. We also provide Chow rank non-decreasing transformations from sets of graphs to sets of functional graphs.

\end{abstract}

\noindent\textbf{Keywords.} Boolean Functions, Computational Complexity

\noindent\textbf{AMS subject classifications.} 68Q05, 68Q15, 68Q17

%%%%%%%%%%%%%%%%%%%%%%%%%%%%%%%%%%%%%%%%%%%
\section{Introduction}\label{s:intro}

In his epochal title, “An Investigation of the Laws of Thought,” George Boole laid the foundations of Boolean algebra, the first pillar of the theory of computation \cite{Boo09}. In their own seminal works, Godel, Church, and Turing formalized the notion of effective computability via the theories of general recursive functions, $\lambda$-calculus, and Turing machines respectively \cite{God65, Chu23, Tur37}. These formulations were shown to be equivalent, and the Turing machine endures as the second pillar of the theory of computation. Finally, in his 1940 master's thesis, Claude Shannon described a general procedure for implementing Boolean algebra via switching circuits, the third pillar of the theory of computation, which would eventually usher in the information era \cite{Sha38}.

Although we believe $P\neq NP$, decades of research have produced no super\textendash polynomial lower bounds on Turing machine complexity. In the 1970s and 80s there was some hope that analyzing circuits would prove more fruitful. It is widely believed that $NP$ does not admit circuits of polynomial size, and we know that a proof of this would imply $P\neq NP$. Unfortunately, lower bounds seem difficult to show for general circuits as well \cite{Raz97}. This in turn led researchers to consider restricted classes of circuits, for which some non\textendash trivial lower\textendash bounds have been proven \cite{Ajt83, Wil11}. Inspired by Boole's translation of logic into algebra and the authors' interest in symbolic listings, this work proposes an algebraic model of computation called “differential computers." We think of differential computers as an algebraic implementation of a restricted class of Turing machines, and the programs of differential computers are closely related to low-depth arithmetic circuit complexity. Thus, differential computers lie at the intersection of the three pillars of the theory of computation and symbolic listings.

In the model of differential computers, computation is carried out by the action of differential operators on particular polynomials whose monomial support corresponds to YES instances of a Boolean function. The Chow rank of the polynomial underlying a differential computer is of particular interest, and can be seen as a measure of the compressibility of the truth table of $\F$. By using the Chow rank as a measure of complexity, this work also relates aspects of low depth arithmetic circuit complexity to the complexity of Boolean functions.

Recent depth reduction results motivate our focus on low depth arithmetic circuits \cite{Val81, Raz13, Gup16, Val79, Agr08}. Differential operators have previously been used in the context of arithmetic complexity by Baur and Strassen \cite{Bau83}. More recently, Brand and Pratt matched the runtime of the fastest known deterministic algorithm for detecting subgraphs of bounded path–width using a method of partial derivatives \cite{Bra20}. We refer the reader to the excellent surveys on partial differential methods in arithmetic complexity by Shpilka and Yehudayoff as well as Chen et al.
\cite{Shp10, Che11}. The importance of the computational power of differential operators is
reinforced by the role of these operators in training machine learning models.

%%%%%%%%%%%%%%%%%%%%%%%%%%%%%%%%%%%%%%%%%%%%%%%%
\section{A Symbolic Model of Computation}

%%%%%%%%%%%%%%%%%%%%%%%%%%%%%%%%%%%%%%%%%%%%%%%%
\subsection{Symbolic Listings}

Throughout the paper we work over the field $\mathbb{C}$. Let $\mathbb{Z}_{n}$ denote the set formed
by the first $n$ consecutive non--negative integers, i.e.
\[
    \mathbb{Z}_{n}:=\mathbb{Z}\cap\left[0,n\right).
\]
Let $G$ be an unweighted, directed graph on $n$ vertices with self\textendash loops. We will primarily specify the data of $G$ by a standard binary adjacency matrix $B_G$, although the following two representations will be critical for an intuitive understanding of symbolic listings. The first is the \textbf{symbolic adjacency matrix} $A_G$, where
\[
    A_G[i,j]=
    \begin{cases} 
        a_{i,j} & (i,j) \in E(G) \\
        0       & (i,j) \not\in E(G)
    \end{cases}.
\]
The second is the \textbf{monomial edge listing} $M_G$, where
\[
    M_G=\prod_{(i,j)\in E(G)}a_{i,j}.
\]
We will also use $A$ to denote the matrix of symbolic variables where $A[i,j]=a_{i,j}$ for all $\lp i,j \rp \in \Z_n \times \Z_n$. This can be thought of as the symbolic adjaceny matrix of the \textbf{totally complete graph} ($\mathbb{K}_n$) in which all possible directed and loop edges are present.

We say a graph $G$ is \textbf{functional} if every vertex in $G$ has out\textendash degree equal to one. In this case we think of $G$ as representing a function $g:\mathbb{Z}_n\rightarrow\mathbb{Z}_n,$ and write
\[
    M_g:=M_G=\prod_{i\in\Z_n}a_{i,g(i)}.
\]
Observe that the polynomial
\[
    P_{\Z_n\rightarrow\Z_n}(A):=\sum_{f:\Z_n\rightarrow\Z_n}\prod_{i\in\Z_n}a_{i,f(i)}=\sum_{f:\Z_n\rightarrow\Z_n}M_f
\]
is the sum of monomial edge listings of all functional graphs on $n$ vertices. Thus to check if a graph $G$ is functional, it suffices to check if $M_G$ appears as a monomial in $P_{\Z_n\rightarrow\Z_n}$.

\begin{defn}
    Given $m\in\Z_{>0}$ and a Boolean function $\F:\{0,1\}^n\rightarrow\{0,1\}$, an \textbf{additive listing} is a polynomial $P_{\F,m}(\boldsymbol{a})$. The coefficients of $P_{\F,m}(\boldsymbol{a})$ are $m$th roots of unity and its monomial support in the variables $a_0,...,a_{n-1}$ corresponds to the set of YES instances of $\F$. Explicitly,
    \[
        P_{\F,m}(\boldsymbol{a})=\sum_{
        \begin{array}{c}
            \boldsymbol{b}\in\{0,1\}^n \\
            \F\left(\boldsymbol{b}\right)=1
        \end{array}}
        \omega_{\text{lex}(\boldsymbol{b})}\prod_{i\in\Z_n}a_i^{b_i}, \label{eq:RE}
    \]
    where $\text{lex}:\{0,1\}^n\rightarrow\Z_{2^n}$ is any enumeration of binary $n$\textendash vectors, and for all $i\in\Z_{2^n}$, $\omega_i^m=1$. If $m=1$, we omit the subscript $m$ from $P_{\F,m}$ and refer to the $P_{\F}$ as a binary additive listing. To see that $P_{\F,m}(\boldsymbol{a})$ contains the data of the truth table of $\F$, notice that a monomial term appears in $P_{\F,m}$ exactly when it corresponds to an assignment $\boldsymbol{b}$ of $n$ variables such that $\F(\boldsymbol{b})=1.$ That is, $P_{\F,m}(\boldsymbol{a})$ is a symbolic \textbf{additive listing} of monomials corresponding to YES instances of $\F$.
\end{defn}

We note that it is critical to allow for non\textendash binary coefficients in additive listings to separate decision problems from their counting analogues as discussed in Examples \ref{ex:gi} and \ref{ex:perm}. On the other hand, arbitrary coefficients would introduce unbounded computational power into our model. The choice of roots of unity as coefficients is inspired by the distinction between the determinant and the permanent, also discussed in Example \ref{ex:perm}.

\begin{rem} We point out that the binary additive listing $P_{\F}$ is closely related to the Lagrange interpolating polynomial of $\F$. The Lagrange interpolant of $\F$ in the variables $y_0,\cdots,y_{n-1}$ is
\[
    L_{\F}\left(\boldsymbol{y}\right)=\sum_{
    \begin{array}{c}
        \boldsymbol{b}\in\{0,1\}^n \\
        \F(\boldsymbol{b})=1
    \end{array}}
    \prod_{i\in\Z_n}\left(\frac{y_i-(1-b_i)}{2b_i-1}\right),
\]
By construction, $L_{\F}$ is the unique minimum degree polynomial whose evaluations coincides with $\F$ at each point on the Boolean hypercube. $P_\F$ is the canonical representative
\[
    P_{\F}(\boldsymbol{a})\equiv L_{\F}\left(\boldsymbol{y}\right)\mod\left\{ \frac{y_i-\left(1-b_i\right)}{2b_i-1}-a_i^{b_i}:
    \begin{array}{c}
        b_i\in\{0,1\} \\
        i\in\Z_n
    \end{array}
    \right\}. \label{eq:interp_reduc}
\]
This binomial reduction simply corresponds to the substitutions
\[
    \left\{ \frac{y_i-(1-b_i)}{2b_i-1}\leftarrow\left(a_i\right)^{b_i}:
    \begin{array}{c}
        b_i\in\{0,1\} \\
        i\in\Z_n
    \end{array}\right\},
\]
which move the data stored in the truth table of $\F$ from evaluations of the interpolating polynomial $L_{\F}(\boldsymbol{y})$ to coefficients of the additive listing $P_{\F}(\boldsymbol{a})$. Although $P_\F$ is unique, for $m\in\Z_{\geq2}$ there are $m^{\left|\left\{ \F(\boldsymbol{b})=1:\,\boldsymbol{b}\in\left\{ 0,1\right\} ^{n}\right\} \right|}$ distinct additive listings $P_{\F,m}$.
\end{rem}

Given a graph $G$, we can check if the monomial encoding $M_G$ appears as a term in $P_{\Z_n\rightarrow\Z_n}$ by taking a sequence of partial derivatives and evaluating the resulting polynomial at zero. Specifically, the expression
\[
    F_{\Z_n\rightarrow\Z_n}(G)=\left.\left(\prod_{(i,j)\in E(G)}\frac{\partial}{\partial a_{ij}}\right)P_{\Z_n\rightarrow\Z_n}(A)\right\rfloor_{A=\mathbb{0}}
\]
evaluates to one exactly when $M_G$ appears as a monomial term in $P_{\Z_n\rightarrow\Z_n}$, i.e. when $G$ is a functional graph. According to the definition we introduce formally in Section \ref{ss:dc}, we refer to $F_{\Z_n\rightarrow\Z_n}$ as a \textbf{differential computer}.

\begin{ex}\label{ex:funct_list}
    Let $n=2$. Then the additive listing of functional graphs on two vertices is
    \[
        P_{\Z_2\rightarrow\Z_2}(A)=\color{magenta}{ a_{00}a_{10}}+\color{violet}{ a_{00}a_{11}}+\color{blue}{a_{01}a_{10}}+\color{teal}{a_{01}a_{11}}\color{black}.
    \]
    \begin{figure}\begin{center}$\begin{array}{cccc}
        \begin{tikzpicture}
            %% vertices
            \node (0) at (0,0) {};
            \node (1) at (2,0) {};
            \draw[fill=black] (0,0) circle (3pt);
            \draw[fill=black] (2,0) circle (3pt);
            %% vertex labels
            \node at (0,-0.5) {$0$};
            \node at (2,-0.5) {$1$};
            %%% edges
            \draw (0) edge[color=magenta,very thick,->,out=135,in=45,looseness=10] (0) node [above=20pt,fill=white] {\textcolor{magenta}{$a_{00}$}};
            \draw (1) edge[color=magenta,very thick,->,out=225,in=315,looseness=1] (0) node [below=22pt,left=16pt,fill=white] {\textcolor{magenta}{$a_{10}$}};
        \end{tikzpicture} &
        \begin{tikzpicture}
            %% vertices
            \node (0) at (0,0) {};
            \node (1) at (2,0) {};
            \draw[fill=black] (0,0) circle (3pt);
            \draw[fill=black] (2,0) circle (3pt);
            %% vertex labels
            \node at (0,-0.5) {$0$};
            \node at (2,-0.5) {$1$};
            %%% edges/arcs
            \draw (0) edge[color=violet,very thick,->,out=135,in=45,looseness=10] (0) node [above=20pt,fill=white] {\textcolor{violet}{$a_{00}$}};
            \draw (1) edge[color=violet,very thick,->,out=135,in=45,looseness=10] (1) node [above=20pt,fill=white] {\textcolor{violet}{$a_{11}$}};
            \draw (0) edge[color=white,out=315,in=225,looseness=1] (1) node [below=22pt,right=20pt,fill=white] {\textcolor{white}{$a_{0}$}};
        \end{tikzpicture} &
        \begin{tikzpicture}
            %% vertices
            \node (0) at (0,0) {};
            \node (1) at (2,0) {};
            \draw[fill=black] (0,0) circle (3pt);
            \draw[fill=black] (2,0) circle (3pt);
            %% vertex labels
            \node at (0,-0.5) {$0$};
            \node at (2,-0.5) {$1$};
            %%% edges/arcs
            \draw (0) edge[color=blue,very thick,->,out=45,in=135,looseness=1] (1) node [above=22pt,right=16pt,fill=white] {\textcolor{blue}{$a_{01}$}};
            \draw (1) edge[color=blue,very thick,->,out=225,in=315,looseness=1] (0) node [below=22pt,left=16pt,fill=white] {\textcolor{blue}{$a_{10}$}};
        \end{tikzpicture} &
        \begin{tikzpicture}
            %% vertices
            \node (0) at (0,0) {};
            \node (1) at (2,0) {};
            \draw[fill=black] (0,0) circle (3pt);
            \draw[fill=black] (2,0) circle (3pt);
            %% vertex labels
            \node at (0,-0.5) {$0$};
            \node at (2,-0.5) {$1$};
            %%% edges/arcs
            \draw (0) edge[color=teal,very thick,->,out=315,in=225,looseness=1] (1) node [below=22pt,right=16pt,fill=white] {\textcolor{teal}{$a_{11}$}};
            \draw (1) edge[color=teal,very thick,->,out=135,in=45,looseness=10] (1) node [above=20pt,fill=white] {\textcolor{teal}{$a_{01}$}};
        \end{tikzpicture}
    \end{array}$\end{center}
    \caption{Pictorial listing of functional graphs on two vertices.}
    \end{figure}
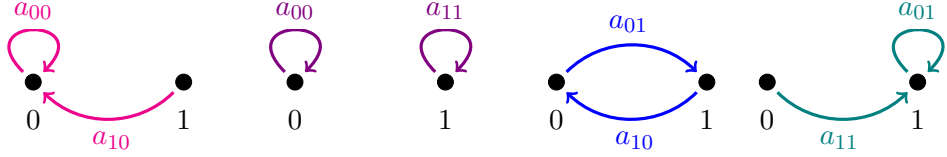\noindent
    To check if the graph with monomial edge listing $a_{00}a_{01}$ is functional, we perform the following sequence of partial derivatives followed by an evaluation at zero of all variables in the resulting polynomial.
    \[
        \left.\frac{\partial}{\partial a_{00}}\frac{\partial}{\partial a_{01}}P_{\Z_2\rightarrow\Z_2}(A)\right\rfloor_{A=\mathbb{0}}
        =\left.\frac{\partial}{\partial a_{00}}\frac{\partial}{\partial a_{01}}\left(\color{magenta}{a_{00}a_{10}}+\color{violet}{a_{00}a_{11}}+\color{blue}{a_{01}a_{10}}+\color{teal}{ a_{01}a_{11}}\right)\right\rfloor_{A=\mathbb{0}}
    \]
    \[
        =\left.\frac{\partial}{\partial a_{00}}\left({\color{blue}a_{10}} + {\color{teal}a_{11}} \right)\right\rfloor_{A=\mathbb{0}}
        =0
    \] \noindent
    Thus said graph is not functional, which can also be seen by observing that vertex $0$ has out\textendash degree two.
\end{ex}

\subsection{The Chow Rank}\label{ss:chow}

Although the binary additive listing $P_\F$ is unique (by the uniqueness of the Lagrange interpolant), our interest lies in the space of Chow decompositions of $P_{\F,m}.$ Chow decompositions will provide us with a measure of the complexity of $P_{\F,m}$ and in turn the Boolean function $\F$ that it encodes.

\begin{defn}
    Recall that the \textbf{Chow rank}, also called the product or split rank, of the degree $d-1$ polynomial $P\in\mathbb{C}\left[x_{0},...,x_{n-1}\right]$  is the smallest number of additive terms $\rho$ in an expression of the form
    \[
        P(\boldsymbol{x})=\sum_{u\in\Z_\rho} \prod_{v\in\Z_d} l_{u,v}\lp x_{0},...,x_{n-1}\rp
    \]
        where each $l_{u,v}$ is a non-homogeneous linear form in the variables of $x_{0},...,x_{n-1}$ \cite{Ilt16}. Note that the coefficients of the linear forms $l_{u,v}$ can be taken from a third order hypermatrix $H \in \mathbb{C}^{\rho \times d \times (n+1)}$ such that
    \[
        P(\boldsymbol{x})=\sum_{u\in\Z_\rho} \prod_{v\in\Z_d} \left(H_{u,v,n} + \sum_{w\in\Z_n} H_{u,v,w} x_w \right).
    \]
    We will refer to the number of Chow rank one summands in a Chow decomposition underlied by the hypermatrix $H$ as $\rho(H)$.
\end{defn}

First, notice that the expanded form of $P$ expresses a Chow decomposition in which each variables is seen as a linear form with a single non\textendash zero coefficient. Thus, the number of terms in the expanded form of $P$ is an upper bound on its Chow rank. On the other hand, the Chow rank yields a lower bound on the fan\textendash in of the top addition gate of a depth\textendash 3 $\sum \prod \sum$ arithmetic circuit expressing $P$, as shown in Figure \ref{fig:sps_formula} \cite{Gup16}.

\begin{figure}[ht]
    \centering
    \begin{tikzpicture}
[  
   level 2/.style={sibling distance=25mm},
   level 3/.style={sibling distance=8mm},
   level 4/.style={sibling distance=10mm,level distance=10mm}]
    \node (P){$P$}
    child{node (top) {$\bigoplus$}  edge from parent [<-]
        child {node (times-1){$\bigotimes$} edge from parent [<-]
            child {node (plus-1) {$\bigoplus$} edge from parent [<-]
                child {node (variables) {$c_0x_0$}}
                child {node {$\cdots$}}
                child {node {$c_{n-1}x_{n-1}$}}
            }
            child {node {$\bigoplus$} edge from parent [<-]
            }
            child {node {$\bigoplus$} edge from parent [<-]
            }
        }
        child {node {$\bigotimes$} edge from parent [<-]
            child {node {$\bigoplus$}}
            child {node {$\bigoplus$}}
            child {node {$\bigoplus$}}
        }
        child {node {$\bigotimes$} edge from parent [<-]
            child {node {$\bigoplus$}}
            child {node {$\bigoplus$}}
            child {node {$\bigoplus$}}
        };
    \draw[-,dashed] (top) -- +(-6,0) node[left] { The top addition gate};
    \draw[-,dashed] (times-1) -- +(-3.5,0) node[left] { Multiplication gate};
    \draw[-,dashed] (plus-1) -- +(-2.7,0) node[left] { Addition gate};
    \draw[-,dashed] (variables) -- +(-1.7,0) node[left] { Inputs};
    \draw[-,dashed] (-1.5,-0.75) -- +(-4.5,0) node[left] { Fan\textendash in: number of inputs to the top gate};
    \draw[-,dashed] (P) -- +(-6,0) node[left] { The target polynomial};
    };
\end{tikzpicture}
    \caption{$\sum \prod \sum$\textendash formula.}
    \label{fig:sps_formula}
\end{figure}
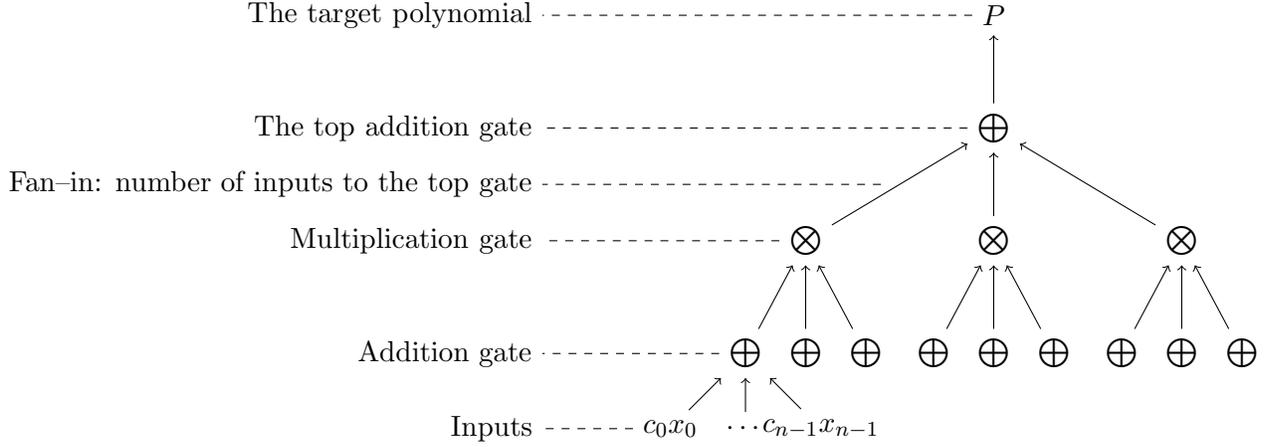

\begin{ex}
    Observe that the additive listing $P_{\Z_n\rightarrow\Z_n}$ exhibits an exponential separation between its Chow rank and the Chow rank upper bound given by its expanded form
    \[
    P_{\Z_n\rightarrow\Z_n}(A)=\sum_{f:\Z_n\rightarrow\Z_n}\prod_{i\in\Z_n}a_{i,f(i)}=\prod_{i\in\Z_n}\sum_{j \in \Z_n}a_{i,j}.
    \]
\end{ex}

\subsection{Differential Computers} \label{ss:dc}

We now describe formally how additive listings can be used to implement Boolean functions. Subsequent examples aim to illustrate this implementation and motivate our definition of additive listings.

\begin{defn}
    Given an additive listing $P_{\mathcal{F},m}$, a \textbf{differential computer} is the following implementation of the Boolean function $\F:\{0,1\}^n\rightarrow\{0,1\}$
    \[
       F_{\F}(\boldsymbol{b})=\lp\left.\left(\underset{i\in\Z_{n}}{\prod}\left(\frac{\partial}{\partial a_{i}}\right)^{b_{i}}\right)P_{\F,m}(\boldsymbol{a})\right\rfloor _{\boldsymbol{a}=\boldsymbol{0}}\rp^{m}.
    \]
    In the notation $\left(\frac{\partial}{\partial a_{i}}\right)^{b_{i}}$, $b_i$ is an exponent indicating that the partial derivative with respect to $a_i$ should be taken exactly when $b_i=1$. We refer to $m$ as the exponent parameter, which maps the output of the inner expression to $\{0,1\}$. In analogy with a classical computer, $P_\F$ is the program, and $\boldsymbol{b}$ is the input. The computation is executed via the sequence of partial derivatives followed by the evaluation of the resulting polynomial at $\boldsymbol{a}=\boldsymbol{0}$. Finally, if the hypermatrix $H\in\mathbb{Q}^{\rho\times d\times(n+1)}$ underlies $P_\F$ then the complexity of $F_\F$ is said to be $|H| = (n+1)\rho d$. Thus the Chow rank of $F_\F$ is a lower bound on its optimal running time.
\end{defn}

\begin{ex}
    Given $S\subset\Z_n$, define $\F_{=S}$, $\F_{\subseteq S}:\{0,1\}^n\rightarrow\{0,1\}$ which test if an input set $T$, represented by the indicator vector $\boldsymbol{1}_T$, is equal to or a subset of the fixed set $S$. That is,
    \[
    \F_{=S}(\boldsymbol{1}_T)=
    \begin{cases}
        1   &   T=S \\
        0   &   T\neq S
    \end{cases},\text{ and   }
    \F_{\subseteq S}(\boldsymbol{1}_T)=
    \begin{cases}
        1   &   T\subseteq S \\
        0   &   T\not\subseteq S
    \end{cases}.
    \]
    We can write binary additive listings for these functions as follows.
    \[
        P_{=S}(\boldsymbol{a})=\prod_{i\in S}a_i
    \]
    \[
        P_{\subseteq S}(\boldsymbol{a})=\sum_{R\subseteq S}\prod_{i\in R}a_i=\prod_{i\in S}( 1+a_i)
    \]
    This illustrates once again that although binary additive listings are unique, there may be an exponential gap between their Chow rank and $\rho(H)$ for $H$ underlying a given Chow decomposition. Since both of the right hand side Chow decompositions satisfy $\rho(H)=1$, they are trivially optimal. Now we can use these optimal additive listings to construct optimal differential computers as follows.
    \[
        F_{=S}(\boldsymbol{1}_T)=\left.\left(\prod_{i\in T}\frac{\partial}{\partial a_i}\right)\prod_{i\in S}a_i\right\rfloor_{\boldsymbol{a}=\boldsymbol{0}}
    \]
    \[
        F_{\subseteq S}(\boldsymbol{1}_T)=\left.\left(\prod_{i\in T}\frac{\partial}{\partial a_i}\right)\prod_{i\in S}(1+a_i)\right\rfloor_{\boldsymbol{a}=\boldsymbol{0}}
    \]
\end{ex}

\begin{ex} \label{ex:gi}
    Building upon the previous example, consider $\F_{\sim= G}$ which tests if an input graph $H$ is isomorphic to the fixed graph $G$. Differential computers can equivalently be defined with respect to matrix inputs as follows.
    \[
    F_{\sim=G}\left(B_{H}\right)=\left.\left(\underset{\lp i,j\rp \in\Z_{n}\times\Z_{n}}{\prod}\left(\frac{\partial}{\partial a_{i,j}}\right)^{B_{H}[i,j]}\right)P_{\sim=G}\left(A\right)\right\rfloor _{A=\mathbb{0}}
    \]
    We can write the expanded form of the binary additive listing $P_{\sim=G}$ as
    \[
        P_{\sim=G}(A)=\sum_{\sigma\in\text{S}_n/\text{Aut}(G)}\prod_{(i,j)\in E(\sigma G \sigma^{-1})}a_{i,j},
    \]
    which underlies the differential computer
    \[
        F_{\sim=G}\left(B_{H}\right)=\left.\left(\underset{\lp i,j\rp \in\Z_{n}\times\Z_{n}}{\prod}\left(\frac{\partial}{\partial a_{i,j}}\right)^{B_{H}[i,j]}\right)\sum_{\sigma\in\text{S}_n/\text{Aut}(G)}\prod_{(i,j)\in E(\sigma G \sigma^{-1} )}a_{i,j}\right\rfloor_{A=\mathbb{0}}.
    \]
    Observe that evaluating $P_{\sim=C_n}(B_H)$ counts the number of Hamiltonian cycles in the graph $H$, which is known to be \#P\textendash Complete \cite{Val79}. Since Chow decompositions of $P_{\sim=C_n}$ correspond to depth-three arithmetic formula computing $P_{\sim=C_n}$, we conjecture that the Chow rank of $P_{\sim=C_n}$ is super-polynomial in $n$.
\end{ex}

\begin{ex} \label{ex:perm}
    Consider $\F_{\in\text{S}_n}:\{0,1\}^{n\times n}\rightarrow\{0,1\}$ which tests if matrix represents a permutation. Recall that
    \[
        \text{Per}(A)=\sum_{\sigma\in\text{S}_n}\prod_{i\in\Z_n}a_{i,\sigma(i)},
    \]
    and notice that $\text{Per}(A)$ is the sum of all monomial edge listings of permutations on $\Z_n$. Thus, $\text{Per}(A)$ is the unique binary additive listing of permutations on $\Z_n$ and underlies a differential computer $F_{\in\text{S}_n}$ with exponent parameter one. At the same time, $\text{Per}(B_G)$ computes the number of spanning subgraphs of $G$ which are the disjoint union of directed cycles, another $\#P$\textendash Complete problem \cite{Val79}. In fact, it is known that
    \[
        \text{Chow rank}(\text{Per}(A))\geq \frac{2^n}{\sqrt{n}},
    \]
    and thus it must be hard to test bijectivity on a differential computer with exponent parameter one \cite{Ilt16}. On the other hand, we know that testing bijectivity is equivalent to testing injectivity and surjectivity, both of which are easy in the classical model of computation. This issue is resolved by allowing additive listings with non\textendash zero coefficients different from one. Recalling
    \[
        \text{Det}(A)=\sum_{\sigma\in\text{S}_n}\text{sgn}(\sigma)\prod_{i\in\Z_n}a_{i,\sigma(i)},
    \]
    notice that $P_{\in\text{S}_n,2}=\text{Det}(A)$ is an additive listing for $\F_{\in\text{S}_n}$ with exponent parameter two, bypassing the exponential Chow rank lowerbound on $\text{Per}(A)$.
\end{ex}

\begin{rem}
    Given a matrix $A\in\C^{n\times n}$, recall the identity
    \[
        A^{-1}=\nabla_{A^\top }\text{ln}(\text{det}(A))
    \]
    where for all $f:\C^{n\times n}\rightarrow \C$, $\nabla_{A^\top }f(A)$ is defined entry\textendash wise as
    \[
        (\nabla_{A^\top }f(A))[i,j]=\frac{\partial}{\partial a_{j,i}}f(A).
    \]
    This expression can be seen as $n^2$ parallel differential computers where the evaluation at $A=\mathbb{0}_{n\times n}$ is skipped and our model is enriched with a natural logarithm gate.
    Further, by the identity 
    \[
        \begin{pmatrix}
            I_n                     & X     &   \mathbb{0}_{n\times n}  \\
            \mathbb{0}_{n\times n}  & I_n   &   Y                       \\
            \mathbb{0}_{n\times n}  & \mathbb{0}_{n\times n} & I_n
        \end{pmatrix}^{-1}
        =
        \begin{pmatrix}
            I_n                     & -X    &   XY  \\
            \mathbb{0}_{n\times n}  & I_n   &   -Y  \\
            \mathbb{0}_{n\times n}  & \mathbb{0}_{n\times n} & I_n  
        \end{pmatrix},
    \]
    asymptotically optimal algorithms for inverting matrices yield asymptotically optimal algorithms for multiplying matrices. Thus we hope that greater understanding of the computational power of derivative operators may offer novel insight into the complexity of matrix multiplication.
\end{rem}

%%%%%%%%%%%%%%%%%%%%%%%%%%%%%%%%%%%%%%%%%%%%%%%%
\subsection{Functional Computers}

When considering Boolean functions on graphs, it can simplify matters greatly to assume that these graphs are functional. In Lemma \ref{lem:reduc}, we provide some justification for restricting our attention to functional graphs. We then conclude by discussing how the homogeneity of functional graphs translates to simpler differential computers.

We begin with the observation that the binary adjacency matrix of a graph $G$ on $n$ vertices, $B_G$, can itself be viewed as a function $g:\Z_n\times\Z_n\rightarrow\{0,1\}$, which maps an ordered pair $(i,j)$ to the matrix entry $B_{G}[i,j]$. In this way binary adjacency matrices exhibit a bijection between graphs and functions $g:\Z_n\times\Z_n\rightarrow\{0,1\}$. Then by identifying both the domain $\Z_n\times\Z_n$ and the codomain $\Z_2$ of $g$ either as subsets of $\Z_{n^2+2}$ or simply $\Z_{n^2}$, we produce injections from the set of graphs on $n$ vertices to the set of functional graphs on $n^2+2$ or $n^2$ vertices.

\begin{lem} \label{lem:reduc}
    Let $\G_n$ and $\Tilde{\G}_n$ denote the set of all graphs and functional graphs respectively on $n$ vertices. Then there exist two efficient transformations $T_f:\powerset(\G_n)\rightarrow\powerset(\Tilde{\G}_{n^2+2})$ and $T:\powerset(\G_n)\rightarrow\powerset(\Tilde{\G}_{n^2})$ which are Chow rank non\textendash decreasing.
\end{lem}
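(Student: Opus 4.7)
The plan is to build both transformations at the level of individual graphs, extend to sets by image, and prove the Chow rank non-decreasing property by exhibiting an affine variable substitution that collapses the additive listing on the target family back into the additive listing on the source family. Fix a bijection $\phi:\Z_n\times\Z_n\to\Z_{n^2}$, for instance $\phi(i,j)=i\cdot n+j$. For $G\in\G_n$ define $T(G)\in\Tilde{\G}_{n^2}$ to be the functional graph whose out-edge from vertex $k=\phi(i,j)$ lands at vertex $B_G[i,j]\in\{0,1\}\subset\Z_{n^2}$. For $T_f$ reserve two fresh vertices $n^2$ and $n^2+1$ to play the role of the codomain bits $0$ and $1$, pin their out-edges to self-loops, and route vertex $\phi(i,j)$ to $n^2+B_G[i,j]$. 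Both assignments are injective and computable in polynomial time, and they lift to $\powerset(\G_n)$ by taking images, so the induced set maps are efficient.

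For the Chow rank inequality let $A'$ denote the symbolic adjacency matrix on the target. Choosing the enumeration of YES instances on the target so that the $m$-th root of unity assigned to $G$ also indexes $T(G)$ (a free parameter in the definition of additive listings), we obtain
\[
    P_{T(\mathcal{S}),m}(A')=\sum_{G\in\mathcal{S}}\omega_G\prod_{k\in\Z_{n^2}}a'_{k,B_G[\phi^{-1}(k)]}.
\]
Now consider the affine substitution $\sigma$ on the entries of $A'$ defined by
\[
    \sigma(a'_{k,0})=1,\qquad \sigma(a'_{k,1})=a_{\phi^{-1}(k)},\qquad \sigma(a'_{k,\ell})=0\text{ for }\ell\geq 2.
\]
Termwise $\sigma$ carries $\prod_k a'_{k,B_G[\phi^{-1}(k)]}$ to $\prod_{(i,j)\in E(G)}a_{i,j}$, and hence $\sigma\bigl(P_{T(\mathcal{S}),m}(A')\bigr)=P_{\mathcal{S},m}(A)$. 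A parallel substitution handles $T_f$: specialize the two self-loop variables to $1$, send $a'_{k,n^2}\mapsto 1$ and $a'_{k,n^2+1}\mapsto a_{\phi^{-1}(k)}$ for $k\in\Z_{n^2}$, and kill the remaining entries.

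The conclusion follows from the general observation that Chow rank is non-increasing under affine variable substitution: given any Chow decomposition $\sum_u\prod_v l_{u,v}(A')$ of $P_{T(\mathcal{S}),m}$, post-composing each affine form $l_{u,v}$ with $\sigma$ produces a Chow decomposition of $P_{\mathcal{S},m}$ with the same number of summands, so $\text{Chow rank}(P_{\mathcal{S},m})\leq\text{Chow rank}(P_{T(\mathcal{S}),m})$, with the identical argument handling $T_f$. The principal obstacle is not algebraic but combinatorial book-keeping: one must coordinate the lex-enumerations and the choice of $m$-th roots of unity between source and target listings so that $\sigma$ transports coefficients to coefficients, not merely monomials to monomials. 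Once this correspondence is fixed, the remaining steps are mechanical and the monotonicity of Chow rank under affine substitution is the only algebraic input required.
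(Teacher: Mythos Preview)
Your proof is correct and follows essentially the same strategy as the paper: encode each graph $G$ as a functional graph by routing vertex $\phi(i,j)$ according to the bit $B_G[i,j]$, then recover $P_{\mathcal S}$ from $P_{T(\mathcal S)}$ by an affine specialization of variables, which can only lower Chow rank. The only cosmetic differences are that the paper places the two extra vertices for $T_f$ at positions $0,1$ with out-edges prescribed by an arbitrary $f:\Z_2\to\Z_2$ (whence the subscript $f$) rather than at positions $n^2,n^2+1$ with self-loops, and the paper restricts to the binary listing whereas you also track the $m$-th root of unity coefficients.
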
 \label{lem:trans}
\begin{proof}
    Consider a graph $G$ on $n$ vertices with monomial edge listing $M_G$. Let $f:\Z_2\rightarrow\Z_2$, and define $T_f$ by its action on $M_G$
    \[
        \prod_{(i,j)\in E(G)}a_{i,j} \rightarrow \left(M_{f}\prod_{(i,j)\in E(G)}x_{2+ni+j,1}\prod_{(i,j)\in \overline{E(G)}}x_{2+ni+j,0}\right).
    \]
    Similarly, define $T$ by
    \[
        \prod_{(i,j)\in E(G)}a_{i,j} \rightarrow \left(\prod_{(i,j)\in E(G)}x_{ni+j,1}\prod_{(i,j)\in \overline{E(G)}}x_{ni+j,0}\right).
    \]
    
    Observe first that both $T_f(M_G)$ and $T(M_G)$ contain exactly the data of the listing $M_G$. Next, notice that $T_f(M_G)$ and $T(M_G)$ are monomial edge listings of functional graphs on $n^2+2$ and $n^2$ vertices respectively. Thus, given a set of graphs $S$ on $n$ vertices we can transform $S$ into a set of functional graphs $T_f(S)$ or $T(S)$ by applying the the appropriate transformation to the monomial edge listing of each graph in $S$. Now we consider the Chow ranks of
    \[
        P_{\in S}=\sum_{G\in S}\prod_{(i,j)\in E(G)}a_{i,j},
    \]
    \[
        P_{\in T_f(S)}=M_f\sum_{G\in S}\left(\prod_{(i,j)\in E(G)}a_{2+ni+j,1}\prod_{(i,j)\in \overline{E(G)}}a_{2+ni+j,0}\right),
    \]
    and
    \[
        P_{\in T(S)}=\sum_{G\in S}\left(\prod_{(i,j)\in E(G)}a_{ni+j,1}\prod_{(i,j)\in \overline{E(G)}}a_{ni+j,0}\right).
    \]
    The Chow rank of $P_{f\in T(S)}$ is at least the Chow rank of $P_{\in S}$ since $P_{f\in S}$ is the restriction of $P_{\in T(S)}$ along $a_{0,0}=a_{1,0}=a_{2+ni+j,0}=1$ followed by the relabelling $a_{2+ni+j,1} \rightarrow a_{i,j}$ for all $(i,j)\in\Z_n\times\Z_n$. Similarly, $P_{\in S}$ is the restriction of $P_{\in T(S)}$ along $a_{ni+j,0}=1$ followed by the relabelling $a_{ni+j,1}\rightarrow a_{i,j}$ for all $(i,j)\in\Z_n\times\Z_n$, and thus the claim holds. 
\end{proof}

\begin{ex}
    Again consider the graph $G$ on two vertices with monomial edge listing $a_{00}a_{01}$. The images of $M_G$ under $T_{a_{00}a_{10}}$ and $T$ are
    \[
        T_{a_{00}a_{10}}(G)=\left(a_{0,0}a_{1,0}\prod_{(i,j)\in E(G)}a_{2+ni+j,1}\prod_{(i,j)\in \overline{E(G)}}a_{2+ni+j,0}\right)=a_{00}a_{10}a_{21}a_{31}a_{40}a_{50},
    \]
    and
    \[
        T(G)=\left(\prod_{(i,j)\in E(G)}a_{ni+j,1}\prod_{(i,j)\in \overline{E(G)}}a_{ni+j,0}\right)=a_{01}a_{11}a_{20}a_{30},
    \]
    which underlie the functional graphs depicted in Figure \ref{fig:trans}.

    \begin{figure} \begin{center} $\begin{array}{ccccc}\begin{tikzpicture}
    %% vertices
    \node (0) at (0,0) {};
    \node (1) at (1.5,0) {};
    \draw[fill=black] (0,0) circle (3pt);
    \draw[fill=black] (1.5,0) circle (3pt);
    %% vertex labels
    \node at (0,-0.5) {$0$};
    \node at (1.5,-0.5) {$1$};
    %%% edges
    \draw (0) edge[color=magenta, very thick,->,out=135,in=45,looseness=10, ] (0) node [above=20pt,fill=white] {$\color{magenta} a_{00}$};
     \draw (0) edge[color=teal, very thick,->,out=315,in=225,looseness=1] (1) node [below=22pt,right=16pt,fill=white] {$\color{teal}a_{01}$};
    \end{tikzpicture} & \quad\quad &

    \begin{tikzpicture}
    %% vertices
    \node (0) at (1.5,2) {};
    \node (1) at (3,2) {};
    \node (2) at (0,0) {};
    \node (3) at (1.5,0) {};
    \node (4) at (3,0) {};
    \node (5) at (4.5,0) {};
    \draw[fill=black] (1.5,2) circle (3pt);
    \draw[fill=black] (3,2) circle (3pt);
    \draw[fill=black] (0,0) circle (3pt);
    \draw[fill=black] (1.5,0) circle (3pt);
    \draw[fill=black] (3,0) circle (3pt);
    \draw[fill=black] (4.5,0) circle (3pt);
    %% vertex labels
    \node at (1.5,2.5) {$0$};
    \node at (3,2.5) {$1$};
    \node at (0,-0.5) {$2$};
    \node at (1.5,-0.5) {$3$};
    \node at (3,-0.5) {$4$};
    \node at (4.5,-0.5) {$5$};
    \node at (-0.5,0) {$\color{magenta}a_{00}$};
    \node at (1,0) {$\color{teal}a_{01}$};
    \node at (2.5,0) {$a_{10}$};
    \node at (4,0) {$a_{11}$};
    %%% edges
    \draw (0) edge[thick, ->, in=135, out=225, looseness=10] (0);
    \draw (1) edge[thick, ->, in=45, out=135, looseness=1] (0);
    \draw (2) edge[color=magenta, very thick,->,out=90,in=270,looseness=0.5] (1);
    \draw (3) edge[color=teal, very thick,->,out=90,in=270,looseness=0.5] (1);
    \draw (4) edge[thick,->,out=90,in=270,looseness=0.5] (0);
    \draw (5) edge[thick,->,out=90,in=270,looseness=0.5] (0);
    \end{tikzpicture} & \quad\quad &
    
    \begin{tikzpicture}
    %% vertices
    \node (0) at (0,0) {};
    \node (1) at (1.5,0) {};
    \node (2) at (3,0) {};
    \node (3) at (4.5,0) {};
    \draw[fill=black] (0,0) circle (3pt);
    \draw[fill=black] (1.5,0) circle (3pt);
    \draw[fill=black] (3,0) circle (3pt);
    \draw[fill=black] (4.5,0) circle (3pt);
    %% vertex labels
    \node at (0,-0.5) {$0$};
    \node at (1.5,-0.5) {$1$};
    \node at (3,-0.5) {$2$};
    \node at (4.5,-0.5) {$3$};
    \node at (-0.5,0) {$\color{magenta}a_{00}$};
    \node at (1,0) {$\color{teal}a_{01}$};
    \node at (2.5,0) {$a_{10}$};
    \node at (4,0) {$a_{11}$};
    %%% edges
    \draw (0) edge[color=magenta, very thick,->,out=315,in=225,looseness=1] (1);
    \draw (1) edge[color=teal, very thick,->,out=135,in=45,looseness=10] (1);
    \draw (2) edge[thick,->,out=90,in=90,looseness=1] (0);
    \draw (3) edge[thick,->,out=90,in=90,looseness=1] (0);
    \end{tikzpicture}
    \end{array}$ \end{center} \caption{Example of the transformations in \ref{lem:trans} showing $G$, $T_{a_{00}a_{10}}(G)$, and $T(G)$ from left to right.} \label{fig:trans}
    \end{figure}
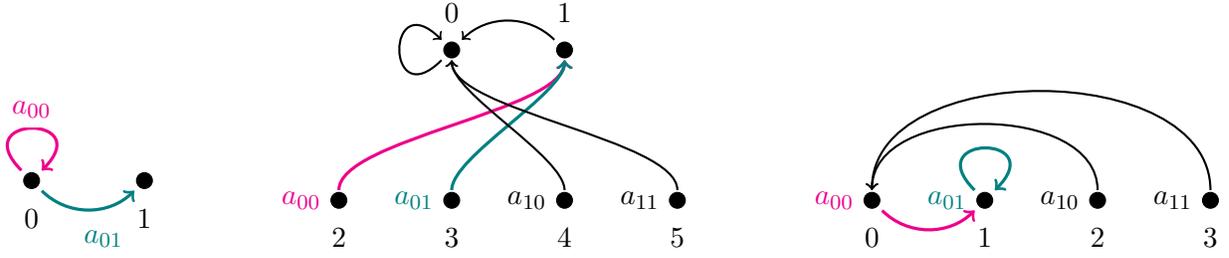
\end{ex}

\begin{defn}
    \textbf{Functional computers} are the restriction of differential computers to functional inputs and additive listings of sets of functions from $\Z_n^{\Z_n}$. Explicitly, given an additive listing $P_{\in S,m}$ of $S\subseteq\Z_n^{\Z_n}$, a \textbf{functional computer} $F_{\in S}:\Z_n^{\Z_n}\rightarrow\{0,1\}$ is the following implementation of the Boolean function $\F_{\in S}$.
    \[
        F_{\in S}(g)=\lp\left(\underset{i\in\Z_{n}}{\prod}\frac{\partial}{\partial a_{i,g(i)}}\right)P_{\in S,m}(A)\rp^{m}
    \]
\end{defn}

\begin{lem} \label{lem:ac}
    Every functional computer of Chow rank $\rho$ in $n$ variables can be implemented by the arithmetic formula
    \[
        A_{\rho,n}(X)=\sum_{u\in\Z_\rho}\prod_{v\in\Z_n}x_{u,v}
    \]
\end{lem}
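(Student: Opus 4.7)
The plan is to start from a rank-$\rho$ Chow decomposition
\[
P_{\in S, m}(A) = \sum_{u \in \Z_\rho} \prod_{v \in \Z_n} l_{u, v}(A)
\]
and unroll the differential operator defining the functional computer. A preliminary structural reduction is needed: since $P_{\in S, m}$ is set-multilinear with respect to the row partition of variables---every supported monomial $\prod_i a_{i, g(i)}$ uses exactly one variable from each row $\{a_{v, j}\}_{j \in \Z_n}$---one may pass to a Chow decomposition in which each linear form $l_{u, v}$ involves only the variables of row $v$, say $l_{u, v}(A) = h_{u, v} + \sum_{j \in \Z_n} c_{u, v, j} a_{v, j}$.

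With this row-block structure in hand the main computation is mechanical. Each operator $\partial / \partial a_{v, g(v)}$ acts nontrivially only on the $v$-th factor of the product $\prod_{v'} l_{u, v'}$, replacing it by the coefficient $c_{u, v, g(v)}$. After applying all $n$ differentiation operators, each Chow summand collapses to $\prod_v c_{u, v, g(v)}$, and summing over $u$ gives
\[
\left.\left(\prod_{i \in \Z_n} \frac{\partial}{\partial a_{i, g(i)}}\right) P_{\in S, m}(A)\right\rfloor_{A = 0} = \sum_{u \in \Z_\rho} \prod_{v \in \Z_n} c_{u, v, g(v)}.
\]
This is exactly $A_{\rho, n}(X)$ at the assignment $x_{u, v} := c_{u, v, g(v)}$; raising to the $m$-th power recovers $F_{\in S}(g)$, so the functional computer is implemented by $A_{\rho, n}$ (composed with an $m$-th power).

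The main obstacle is the preliminary block reduction. For a general, not necessarily row-restricted, Chow decomposition, applying the differential operator would instead produce a sum of $\rho$ permanents of $n \times n$ coefficient matrices, rather than the desired sum of $\rho$ products. The natural way to eliminate these off-diagonal contributions is to decompose each $l_{u, v}$ along the row partition as $h_{u, v} + \sum_w L_{u, v}^{(w)}$, expand $\prod_v l_{u, v}$, and argue that the set-multilinear projection sends any Chow decomposition of a set-multilinear polynomial to a row-block decomposition of the same rank. Making this projection step rigorous---which effectively identifies the general Chow rank of $P_{\in S, m}$ with its row-set-multilinear variant---is the technical heart of the argument; once it is established, the derivative computation above completes the proof.
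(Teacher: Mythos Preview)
Your computation once the row-block form is in hand matches the paper's exactly: with each $l_{u,v}$ supported only on row $v$, the $n$ partial derivatives kill the linear factors one at a time and leave $\sum_{u}\prod_{v} c_{u,v,g(v)} = A_{\rho,n}(X)$ at $x_{u,v}=c_{u,v,g(v)}$. The paper's own argument invokes Lemma~\ref{lem:homo} to pass to a homogeneous decomposition and then simply \emph{writes} the linear forms as $\sum_{w} H_{u,v,w}\,a_{v,w}$, i.e.\ already in row-block shape, without further comment; so you are right that this reduction is where the real content sits.

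The difficulty is that your proposed mechanism for obtaining it does not preserve rank. If you split each $l_{u,v}$ along rows as $h_{u,v}+\sum_{w} L_{u,v}^{(w)}$ and expand $\prod_{v} l_{u,v}$, the set-multilinear projection of a \emph{single} such product is $\sum_{\sigma\in\text{S}_n}\prod_{v} L_{u,v}^{(\sigma(v))}$, a sum of $n!$ row-block products, not one. Summing over $u$ therefore yields a row-block decomposition of rank $\rho\cdot n!$, so the argument only delivers implementation by $A_{\rho\cdot n!,\,n}$ rather than $A_{\rho,n}$. This is exactly the phenomenon you already noticed: without the reduction the differential operator returns $\sum_{u}\operatorname{perm}\bigl(C_u^{(g)}\bigr)$, and there is no free way to collapse each permanent to a single $n$-fold product. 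Identifying the general Chow rank of a set-multilinear polynomial with its row-set-multilinear rank is a genuinely nontrivial statement and is not a consequence of homogeneity (Lemma~\ref{lem:homo}); the paper's proof takes it for granted, and your outline, while more candid about the gap, does not supply a valid argument to close it.
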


\begin{proof}
    First observe that the additive listing of $S\subseteq\Z_n^{\Z_n}$ must be homogenous of degree $n$. Thus by Lemma \ref{lem:homo}, we can restrict our attention to homogenous Chow decompositions of degree $n$.
    Then given such a Chow decomposition of $P_{\in S,m}$,
    \[
        F_{\in S}(g)=\lp\left(\underset{i\in\Z_n}\prod\frac{\partial}{\partial a_{i,g(i)}}\right)\sum_{u\in\Z_\rho}\prod_{v\in\Z_n}\sum_{w\in\Z_n}H_{u,v,w}a_{v,w}\rp^m
    \]
    \[
        =\lp\sum_{u\in\Z_\rho}\prod_{v\in\Z_n}\sum_{w\in\Z_n}\frac{\partial}{\partial a_{v,g(v)}}H_{u,v,w}a_{v,w}\rp^m
        = \lp\sum_{u\in\Z_\rho}\prod_{v\in\Z_n}H_{u,v,g(v)}\rp^m
    \]
    Next observe that the depth\textendash two arithmetic formula
    \[
        A_{\rho,n}(X)=\sum_{u\in\Z_\rho}\prod_{v\in\Z_n}x_{u,v}
    \]
    satisfies $A_{\rho,n}(X)=0\Leftrightarrow F_{\in S}(g)=0$ for inputs $x_{u,v}=H_{u,v,g(v)}$. Thus $A_{\rho,n}(X)$ can be used to implement $F_{\in S}$.
\end{proof}

%%%%%%%%%%%%%%%%%%%%%%%%%%%%%%%%%%%%%%%%%%%%%%%%
\section{Totally Non\textendash Overlapping Polynomials}

%%%%%%%%%%%%%%%%%%%%%%%%%%%%%%%%%%%%%%%%%%%
\subsection{Monomial Non\textendash Overlapping Lemma}

For the duration of this section we consider polynomials with $n\in\Z$ additive terms.

\begin{defn}
    For $m\in\Z$, $m\geq2$ we define $P_m\in\C[x_0,...,x_{mn-1}]$ to be the following homogenous $m$\textendash linear polynomial which has \textbf{totally non\textendash overlapping} monomial support and $\alpha_i\neq 0$ for all $i\in\Z_n$.
    $$P_m=\sum_{i\in\Z_n}\alpha_i \prod_{j\in\Z_m}{x_{mi+j}}$$
    Observe $P_m$ has $n$ additive terms and multilinear degree $m$.
\end{defn}

\begin{ex}
    For $n=3$,
    \[
        P_3=x_0x_1x_2+x_3x_4x_5+x_6x_7x_8.
    \]
    Intuition tells us this totally non\textendash overlapping polynomial should have Chow rank $n$, since no additive terms have common factors. The remainder of the subsection is dedicated to a proof of this fact.
\end{ex}

\begin{lem} \label{lem:pm}
    For all $m\in\Z$, the Chow rank of $P_m$ is at least the Chow rank of $P_2$.
\end{lem}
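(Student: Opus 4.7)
My plan is to reduce from $P_m$ to $P_2$ by a Chow-rank-monotone variable specialization. For each $i \in \Z_n$ and each $j \in \{2, 3, \ldots, m-1\}$, I substitute $x_{mi+j} \leftarrow 1$. Under this specialization, the $i$-th summand $\alpha_i \prod_{j \in \Z_m} x_{mi+j}$ of $P_m$ collapses to $\alpha_i x_{mi} x_{mi+1}$; relabeling $x_{mi} \mapsto y_{2i}$ and $x_{mi+1} \mapsto y_{2i+1}$ shows that the specialized polynomial coincides with $P_2(y_0, \ldots, y_{2n-1})$.

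Given any Chow decomposition $P_m = \sum_{u \in \Z_\rho} \prod_{v \in \Z_m} \ell_{u,v}$ with $\rho$ equal to the Chow rank of $P_m$, applying the same specialization term-by-term yields
\[
P_2 = \sum_{u \in \Z_\rho} \prod_{v \in \Z_m} \tilde{\ell}_{u,v},
\]
where each $\tilde{\ell}_{u,v} := \ell_{u,v}|_{\text{spec}}$ remains a non-homogeneous linear form in the surviving variables $y_0, \ldots, y_{2n-1}$. This exhibits $P_2$ as a sum of at most $\rho$ products of non-homogeneous linear forms, which is the desired bound. The key structural observation is that the specialization $x_{mi+j} \leftarrow 1$ is an affine substitution and therefore commutes with both the outer sum and the inner product of the Chow decomposition, preserving the additive-term count.

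The main obstacle is a technicality concerning the number of factors per product: the expression above for $P_2$ uses $m$ linear factors per product rather than the $2$ factors one would naturally associate with a degree-$2$ polynomial. Since the definition of Chow rank in Section~\ref{ss:chow} admits non-homogeneous linear forms, I plan to handle this by absorbing and padding: any $\tilde{\ell}_{u,v}$ that specializes to a nonzero constant can be merged into an adjacent factor without changing the product, and any trivial slot can be padded by the constant $1$, which is a legitimate non-homogeneous linear form. The delicate step — and the one I expect to require the most care — is verifying that this absorption can always bring each specialized product into the form of a product of exactly $2$ non-homogeneous linear forms in the $y$-variables, without increasing the total number of additive terms. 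Once this normalization is established, the inequality $\mathrm{Chow\ rank}(P_2) \leq \rho = \mathrm{Chow\ rank}(P_m)$ follows immediately.
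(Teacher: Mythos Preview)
Your approach is exactly the paper's: set $x_{mi+j}=1$ for $j\ge 2$ so that $P_m$ restricts to $P_2$ (up to relabelling), and invoke the fact that restriction cannot increase the Chow rank. The paper dispatches this in a single sentence and does not discuss the number-of-factors issue you raise.

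Your added concern is legitimate, but the resolution you sketch does not work. Absorbing constants into adjacent factors succeeds only when, for every $u$, at most two of the specialized forms $\tilde\ell_{u,v}$ retain a nonzero linear part. There is no reason this should hold: in an optimal decomposition of $P_m$ each $\ell_{u,v}$ may involve the surviving variables $x_{mi},x_{mi+1}$ from many blocks $i$, so after specialization every $\tilde\ell_{u,v}$ can still be genuinely affine. In that case $\prod_{v\in\Z_m}\tilde\ell_{u,v}$ has degree strictly larger than $2$, and the excess degrees cancel only in the \emph{sum} over $u$; you cannot normalize each summand separately to a product of two affine forms without splitting it into several terms. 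If you want to close this gap cleanly, read the Chow rank as the minimum top fan-in of a $\Sigma\Pi\Sigma$ formula with unrestricted product fan-in (as the paper's own discussion around Figure~\ref{fig:sps_formula} suggests): under that reading the expression $P_2=\sum_{u\in\Z_\rho}\prod_{v\in\Z_m}\tilde\ell_{u,v}$ already witnesses $\mathrm{Chow\ rank}(P_2)\le\rho$ and no absorption step is needed.
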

\begin{proof}
    Observe that $P_2$ is the restriction of $P_m$ along $x_{mi+j}=1$ for all $i\in\Z_n$, $j\in\Z_m\setminus\{0\}$, which can only decrease the Chow rank.
\end{proof}

\begin{lem} \label{lem:homo}
    For all homogenous $P\in\C[x_0,...,x_{n}]$ of degree $d$, there exists an optimal Chow\textendash decomposition of $P$ which is homogenous.
\end{lem}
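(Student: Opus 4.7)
The plan is to take any optimal Chow decomposition and apply the linear projection onto the degree-$d$ homogeneous component, showing that this projection converts the decomposition into a homogeneous one with the same number of summands.

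Start with an optimal Chow decomposition
\[
    P(\boldsymbol{x}) = \sum_{u \in \Z_{\rho}} \prod_{v \in \Z_d} l_{u,v}(\boldsymbol{x}),
\]
where $\rho$ equals the Chow rank of $P$. Decompose each affine linear factor into its constant and homogeneous pieces by writing $l_{u,v}(\boldsymbol{x}) = c_{u,v} + h_{u,v}(\boldsymbol{x})$, where $c_{u,v} = H_{u,v,n} \in \C$ is the constant term and $h_{u,v}(\boldsymbol{x}) = \sum_{w \in \Z_n} H_{u,v,w}\,x_w$ is the homogeneous linear part.

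Next, let $\pi_d : \C[x_0,\ldots,x_n] \to \C[x_0,\ldots,x_n]$ be the $\C$-linear operator that extracts the degree-$d$ homogeneous component of a polynomial. Since $P$ is homogeneous of degree $d$, we have $\pi_d(P) = P$. For each fixed $u \in \Z_\rho$, expanding
\[
    \prod_{v \in \Z_d} l_{u,v}(\boldsymbol{x}) = \prod_{v \in \Z_d}\bigl(c_{u,v} + h_{u,v}(\boldsymbol{x})\bigr)
\]
gives a sum over all subsets $S \subseteq \Z_d$ of terms $\prod_{v \in S} h_{u,v}(\boldsymbol{x}) \cdot \prod_{v \notin S} c_{u,v}$, whose degree in $\boldsymbol{x}$ equals $|S|$. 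Hence the unique degree-$d$ contribution is the one with $S = \Z_d$, giving $\pi_d\bigl(\prod_v l_{u,v}\bigr) = \prod_v h_{u,v}$.

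Combining these observations and using linearity of $\pi_d$,
\[
    P(\boldsymbol{x}) = \pi_d(P)(\boldsymbol{x}) = \sum_{u \in \Z_\rho} \pi_d\!\left(\prod_{v \in \Z_d} l_{u,v}\right)\!(\boldsymbol{x}) = \sum_{u \in \Z_\rho} \prod_{v \in \Z_d} h_{u,v}(\boldsymbol{x}),
\]
which is a Chow decomposition of $P$ with $\rho$ summands in which every linear factor is homogeneous. Since $\rho$ was the minimum possible, this homogeneous decomposition is optimal, proving the claim. There is no real obstacle here: the entire argument hinges on the elementary fact that the top-degree component of a product of $d$ affine linear forms is the product of their homogeneous linear parts, and the remaining steps are purely formal consequences of the linearity of $\pi_d$ and the homogeneity of $P$.
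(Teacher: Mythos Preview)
Your proof is correct and follows essentially the same approach as the paper's: both start from an optimal Chow decomposition, observe that the degree-$d$ component of a product of $d$ affine linear forms is the product of their homogeneous linear parts, and conclude that dropping all constant terms (equivalently, setting each $H_{u,v,n}=0$) yields a homogeneous optimal decomposition. Your use of the projection $\pi_d$ is just a slightly more formal packaging of the paper's remark that the lower-degree terms must cancel.
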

\begin{proof}
    Suppose $P$ has optimal Chow\textendash decomposition
    \[
        P=\sum_{u\in\Z_\rho} \prod_{v\in\Z_d}\left( H_{u,v,n} + \sum_{w\in\Z_n}{H_{u,v,w}x_w}\right).
    \]
    Since $P$ is homogenous of degree $d$, all terms in its expanded form must have degree $d$. However, any term containing a factor $H_{u,v,n}$ can have degree at most $d-1$, so all such terms must cancel. Thus substituting $H_{u,v,n}=0$ for all $u\in\Z_\rho$, $v\in\Z_d$ yields another optimal Chow-decomposition which is homogenous.
\end{proof}

\begin{lem} \label{lem:skew}
    Homogenous Chow\textendash decompositions of degree two correspond to complex skew\textendash symmetric matrices.
\end{lem}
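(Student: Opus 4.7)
The plan is to reinterpret homogeneous degree-2 Chow decompositions as rank-1 matrix sum decompositions of certain matrices, and then isolate the free parameter as a skew-symmetric matrix. Every degree-2 homogeneous polynomial $P \in \C[x_0, \ldots, x_{n-1}]$ can be written uniquely as $P(x) = x^\top S_P \, x$ for a symmetric matrix $S_P \in \C^{n \times n}$. Given any homogeneous Chow decomposition $P = \sum_{u \in \Z_\rho} (a_u^\top x)(b_u^\top x)$, I would form the matrix $M := \sum_u a_u b_u^\top$ and observe that $x^\top M x = P(x) = x^\top S_P x$. Because a quadratic form only depends on the symmetric part of its matrix, this forces $M + M^\top = 2 S_P$, so $K := M - S_P$ is complex skew-symmetric. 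This yields the forward map from homogeneous Chow decompositions of $P$ to the space of skew-symmetric matrices in $\C^{n \times n}$.

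For the reverse direction, given any skew-symmetric $K \in \C^{n \times n}$, set $M := S_P + K$ and take any rank decomposition $M = \sum_{u \in \Z_{\rank(M)}} a_u b_u^\top$ into rank-1 outer products; this produces the homogeneous Chow decomposition $P = \sum_u (a_u^\top x)(b_u^\top x)$ of length $\rank(S_P + K)$. Combining both directions, the data of a homogeneous Chow decomposition of $P$ is equivalent to the data of a skew-symmetric matrix $K$ together with a rank-1 decomposition of the matrix $S_P + K$.

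The main subtle point is that ``correspond to'' should be read as a fibration rather than a literal bijection: a single skew-symmetric matrix $K$ parametrizes an equivalence class of Chow decompositions that share the associated matrix $M = S_P + K$, differing only in how $M$ is split into rank-1 summands. This formulation is exactly what is needed for the subsequent Chow rank computation, since by Lemma \ref{lem:homo} attention can be restricted to homogeneous decompositions, reducing the computation of the Chow rank of $P_2$ to evaluating $\min_K \rank(S_{P_2} + K)$ over complex skew-symmetric $K$.
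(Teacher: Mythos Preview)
Your proposal is correct and follows essentially the same argument as the paper: both associate to a homogeneous degree-two Chow decomposition the matrix $M=\sum_u a_u b_u^\top$, identify its symmetric part with the unique symmetric matrix representing $P$, and conclude that the freedom is exactly a skew-symmetric matrix. Your explicit remark that the correspondence is a fibration (many rank-one splittings over each $K$) is a welcome clarification of what the paper states more tersely.
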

\begin{proof}
    Suppose $P\in\C[x_0,...,x_{n-1}]$ is homogenous of degree two. Then there exists a unique symmetric $A\in\C^{n\times n}$ such that $P=x^TAx$. On the other hand, a homogenous Chow-decomposition of $P$ has the form
    \[
        E_2=\sum_{0 \leq u < \rho} \prod_{0 \leq v <2}\sum_{w\in\Z_n}H_{u,v,w}x_w
    \]
    \[
        =\sum_{0 \leq u < \rho} \left(\sum_{w\in\Z_n}H_{u,0,w}x_w\right)\left( \sum_{w\in\Z_n}H_{u,1,w}x_w\right)
    \]
    \[
        =\sum_{0 \leq u < \rho} \left(\boldsymbol{H}^\top _{u,0}\boldsymbol{x}\right)\left(\boldsymbol{H}^\top_{u,1}\boldsymbol{x}\right)
        =\sum_{0 \leq u < \rho} \boldsymbol{x}^\top \boldsymbol{H}_{u,0}\boldsymbol{H}^\top _{u,1}\boldsymbol{x}
    \]
    \[
        =\boldsymbol{x}^\top \left(\sum_{0 \leq u < \rho} \boldsymbol{H}_{u,0}\boldsymbol{H}^\top _{u,1}\right)\boldsymbol{x}
        =\boldsymbol{x}^\top B\boldsymbol{x}
    \]
    where $B\in\C^{n\times n}$ is the sum of $\rho$ rank one matrices and thus has rank at most $\rho$.

    Next recall $B$ can be expressed uniquely as the sum of a symmetric and a skew\textendash symmetric matrix $B=B_\text{sym}+B_{skew}$ and that the quadratic form $\boldsymbol{x}^TB\boldsymbol{x}$ is uniquely determined by $B_\text{sym}$. Thus $B_\text{sym}=A$, and the set of matrices $B$ whose quadratic form satisfies $P=\boldsymbol{x}^TB\boldsymbol{x}$ is $\{A+S\,\,|\,\,S=-S^T\}$. The desired correspondence is between Chow decompositions $B$ of $P$ and skew symmetric matrices $S$.
\end{proof}

\begin{cor} \label{cor:rank}
    The Chow rank of a homogenous degree two multilinear polynomial $P\in\C[x_0,...,x_{n-1}]$ is given by
    \[
        \min_{S=-S^\top }\rank\left(A+S\right)
    \]
    where $S\in\C^{n\times n}$, and $A\in\C^{n\times n}$ is the unique symmetric matrix such that $P=\boldsymbol{x}^\top A\boldsymbol{x}$.
\end{cor}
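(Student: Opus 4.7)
The plan is to show the corollary follows by combining Lemmas \ref{lem:homo} and \ref{lem:skew} with a standard rank factorization argument establishing that the Chow rank of $P$ equals the minimum rank over the affine family of matrices representing the quadratic form $P(\boldsymbol{x})=\boldsymbol{x}^\top B\boldsymbol{x}$.

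Since $P$ is homogeneous of degree two, Lemma \ref{lem:homo} allows us to restrict attention to homogeneous Chow decompositions without loss of generality. By Lemma \ref{lem:skew}, every homogeneous Chow decomposition of $P$ with $\rho$ summands produces a matrix $B=\sum_{u\in\Z_\rho}\boldsymbol{H}_{u,0}\boldsymbol{H}_{u,1}^\top$ which necessarily has $\rank(B)\leq\rho$ and satisfies $P=\boldsymbol{x}^\top B\boldsymbol{x}$. The same lemma establishes that such $B$ are exactly the matrices of the form $A+S$ with $S=-S^\top$. This immediately gives the lower bound $\rho\geq\min_{S=-S^\top}\rank(A+S)$ on the Chow rank.

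For the matching upper bound, I would argue as follows: fix any skew--symmetric $S$ minimizing $\rank(A+S)=:r$, and take a rank factorization $A+S=\sum_{u\in\Z_r}\boldsymbol{u}_u\boldsymbol{v}_u^\top$ as a sum of $r$ rank--one outer products. Then
\[
P(\boldsymbol{x})=\boldsymbol{x}^\top(A+S)\boldsymbol{x}=\sum_{u\in\Z_r}\boldsymbol{x}^\top\boldsymbol{u}_u\boldsymbol{v}_u^\top\boldsymbol{x}=\sum_{u\in\Z_r}\bigl(\boldsymbol{u}_u^\top\boldsymbol{x}\bigr)\bigl(\boldsymbol{v}_u^\top\boldsymbol{x}\bigr),
\]
which is exactly a homogeneous Chow decomposition of $P$ with $r$ summands, showing the Chow rank is at most $\min_{S=-S^\top}\rank(A+S)$.

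I do not expect any serious obstacle here: essentially all the content is in Lemmas \ref{lem:homo} and \ref{lem:skew}, and the only new ingredient is the elementary observation that a rank--$r$ matrix decomposes as a sum of $r$ rank--one outer products, which is independent of any symmetry of the matrix. The mildly delicate point worth emphasizing is that the factors $\boldsymbol{u}_u,\boldsymbol{v}_u$ in the rank factorization of $A+S$ need not be related by any symmetry even though $A$ is symmetric, and this is precisely why the freedom to add skew--symmetric matrices $S$ is what determines the Chow rank rather than merely $\rank(A)$.
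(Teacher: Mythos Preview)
Your proposal is correct and follows essentially the same approach as the paper: both directions rest on Lemma~\ref{lem:skew} (together with Lemma~\ref{lem:homo}) for the lower bound, and on the elementary fact that a rank--$r$ matrix is a sum of $r$ rank--one outer products for the upper bound. Your write--up is in fact more careful than the paper's, which states the correspondence somewhat loosely; your explicit separation of the two inequalities and the remark on why the skew--symmetric freedom matters are welcome clarifications.
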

\begin{proof}
    By Lemma \ref{lem:skew}, rank $\rho$ Chow decompositions of $P$ correspond to matrices of the form $B+S$ with rank $\rho$. Conversely, if $P=\boldsymbol{x}^\top  A\boldsymbol{x}$ where $A\in\C^{n\times n}$ has rank $\rho$, then $A$ can be written as the sum of $\rho$ rank one matrices, yielding a rank $\rho$ Chow\textendash decomposition of $P$.
\end{proof}

\begin{lem} \label{lem:e2}
    The Chow rank of $P_2$ is exactly $n$.
\end{lem}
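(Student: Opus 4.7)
The plan is to establish the equality $\text{Chow rank}(P_2) = n$ by matching upper and lower bounds.

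The upper bound is immediate: the defining expression
\[
    P_2 \;=\; \sum_{i\in\Z_n}\bigl(\alpha_i\, x_{2i}\bigr)\cdot x_{2i+1}
\]
is itself a homogeneous Chow decomposition with $n$ summands (each a product of two homogeneous linear forms), so $\text{Chow rank}(P_2)\le n$.

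For the matching lower bound I would use the standard ``span of first partials'' technique. By Lemma \ref{lem:homo} it suffices to consider an optimal decomposition $P_2 = \sum_{u\in\Z_\rho}\ell_u m_u$ in which the $\ell_u,m_u$ are homogeneous linear forms. Applying the Leibniz rule to any variable $x_w$ yields
\[
    \frac{\partial P_2}{\partial x_w} \;=\; \sum_{u\in\Z_\rho}\left(\frac{\partial \ell_u}{\partial x_w}\,m_u \;+\; \ell_u\,\frac{\partial m_u}{\partial x_w}\right),
\]
and since the coefficients $\partial\ell_u/\partial x_w$ and $\partial m_u/\partial x_w$ are scalars, every first partial of $P_2$ lies in $\spn\{\ell_u, m_u : u\in\Z_\rho\}$, a subspace of dimension at most $2\rho$. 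On the other hand, direct differentiation gives $\partial P_2/\partial x_{2i} = \alpha_i x_{2i+1}$ and $\partial P_2/\partial x_{2i+1}=\alpha_i x_{2i}$, and since every $\alpha_i\neq 0$, these $2n$ first partials are nonzero scalar multiples of the $2n$ distinct coordinate forms $x_0,\ldots,x_{2n-1}$, and hence span a $2n$-dimensional space. Therefore $2n \le 2\rho$, i.e., $\rho \ge n$.

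One could alternatively try to route the lower bound through Corollary \ref{cor:rank} by observing that the symmetric matrix $A$ of $P_2$ is block diagonal with $n$ rank-one $2\times 2$ blocks $\frac{\alpha_i}{2}\bigl(\begin{smallmatrix}0&1\\1&0\end{smallmatrix}\bigr)$, so $\rank(A)=n$, and then trying to show $\rank(A+S)\ge n$ for every skew-symmetric $S$. However the naive subadditivity bound $\rank(M)+\rank(M^\top)\ge \rank(M+M^\top)=\rank(2A)=n$ only yields $\rank(A+S)\ge n/2$, so the tight matrix-theoretic lower bound requires a finer argument and is the main obstacle along that path. The partial-derivative route sidesteps this difficulty entirely, which is why I would present it as the main line of proof.
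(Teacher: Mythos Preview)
Your main argument is correct: the $2n$ first partials of $P_2$ are nonzero scalar multiples of the $2n$ distinct coordinate forms and hence linearly independent, while in any homogeneous $\rho$-term decomposition they all lie in the span of the $2\rho$ linear factors, forcing $\rho\ge n$. This is a genuinely different route from the paper, which instead works through Corollary~\ref{cor:rank} and a spectral block-decomposition of the skew-symmetric perturbation $S$ to minimize $\rank(A+S)$ directly. Your span-of-partials argument is shorter, avoids the spectral machinery, and is the standard opening move of the partial-derivative method for $\Sigma\Pi\Sigma$ lower bounds.

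One correction to your closing remark: the $2\times2$ blocks $\frac{\alpha_i}{2}\bigl(\begin{smallmatrix}0&1\\1&0\end{smallmatrix}\bigr)$ have rank \emph{two}, not one, so $\rank(A)=2n$. Consequently the subadditivity bound with $M=A+S$ reads $2\rank(M)=\rank(M)+\rank(M^\top)\ge\rank(M+M^\top)=\rank(2A)=2n$, which already gives $\rank(A+S)\ge n$ exactly. In fact the two arguments coincide: the span of first partials of $\boldsymbol{x}^\top B\boldsymbol{x}$ is the row space of $B+B^\top=2A$, so your inequality $2\rho\ge\dim(\text{partials})$ is precisely $2\rank(B)\ge\rank(2A)$. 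The matrix route is thus not actually harder here---it is the same bound in different clothing---but your partial-derivative presentation is the more transparent one to keep.
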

\begin{proof}
    Observe that for $\boldsymbol{x}=(x_0,...,x_{2n-1})^\top $,
    \[
        P_2=\boldsymbol{x}^\top \left(\frac{\alpha_i}{2}\dsum_{i\in\Z_n}
        \begin{bmatrix}
            0   &   1   \\
            1   &   0
        \end{bmatrix}
        \right)\boldsymbol{x}.
    \]
    Thus by Corollary \ref{cor:rank}, the Chow rank of $P_2$ is given by
    \[
        \min_{S=-S^\top }\rank\left[\left(\dsum_{i\in\Z_n}
        \begin{bmatrix}
            0   &   1   \\
            1   &   0
        \end{bmatrix}
        \right)+S\right]
    \]
    where $S\in\C^{2n\times 2n}$. Recall that every skew-symmetric matrix $S$ of even size has spectrum $\sigma(S)=\{\pm\lambda_i\}_{i\in\Z_n}$ and can be decomposed as
    \[
        S=U^*\left(\dsum_{i\in\Z_n}
        \begin{bmatrix}
            0   &   \lambda_i \\
            -\lambda_i & 0
        \end{bmatrix}
        \right)U
    \]
    where $\lambda_i\in\C$ and $U\in\C^{n\times n}$ is unitary. Without loss of generality, suppose $U$ is full rank, so
    \[
        \rank\left[\left(\dsum_{i\in\Z_n}
        \begin{bmatrix}
            0   &   1   \\
            1   &   0
        \end{bmatrix}
        \right)+S\right]=\rank\left(\dsum_{i\in\Z_n}
        \begin{bmatrix}
            0   &   1+\lambda_i \\
            1-\lambda_i & 0
        \end{bmatrix}\right)
        =\sum_{i\in\Z_n}\rank\left(
        \begin{bmatrix}
            0   &   1+\lambda_i \\
            1-\lambda_i & 0
        \end{bmatrix}\right)
    \]
    which by inspection has a global minimum of $n$ when $\lambda_i=0$ for all $i\in\Z_n$, i.e. $S=\mathbb{0}$.
\end{proof}

\begin{lem}
    The Chow\textendash rank of an $n$ term homogenous multilinear polynomial $P$ with total degree at least two and totally non\textendash overlapping monomial support is exactly $n$.
\end{lem}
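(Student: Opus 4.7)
The plan is to reduce the lemma to a direct corollary of Lemmas \ref{lem:pm} and \ref{lem:e2}, combined with a trivial upper bound from the expanded form of $P$. Since $P$ is homogeneous and multilinear of total degree $d\geq 2$, every monomial of $P$ has degree exactly $d$; and since the monomial supports are totally non-overlapping, the $n$ monomials live on pairwise disjoint sets of variables of common size $d$. After relabeling variables -- an operation that preserves Chow rank because it is an invertible linear substitution -- the polynomial $P$ has exactly the form $P_d = \sum_{i\in\Z_n}\alpha_i\prod_{j\in\Z_d}x_{di+j}$ with all $\alpha_i\neq 0$.

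The upper bound is immediate: the expanded form of $P$ is itself a Chow decomposition with $n$ rank-one summands, each being a product of $d$ linear forms where each linear form is a single variable. Hence the Chow rank of $P$ is at most $n$.

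For the lower bound I would apply Lemma \ref{lem:pm} to conclude that the Chow rank of $P_d$ is at least the Chow rank of $P_2$, and then Lemma \ref{lem:e2} to identify the latter as exactly $n$. Combining with the upper bound gives the claimed equality. Concretely, the restriction argument from Lemma \ref{lem:pm} amounts to fixing two distinguished variables in each monomial and substituting $1$ for the remaining $d-2$ variables, which produces a polynomial of the form $P_2$ and can only decrease Chow rank.

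This argument presents no deep technical obstacle, since it is essentially a repackaging of the two preceding lemmas. The only subtle points are (a) observing that homogeneity together with multilinearity forces all monomials to share a common degree $d$, which is what lets us realize $P$ as $P_d$ up to relabeling, and (b) confirming that bijective variable relabeling does not alter Chow rank. Both are routine but worth stating explicitly, in order to bridge the abstract statement of the present lemma and the concrete setup of the preceding lemmas.
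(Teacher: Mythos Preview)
Your proposal is correct and follows essentially the same approach as the paper: relabel variables to identify $P$ with $P_d$, invoke Lemma~\ref{lem:pm} and Lemma~\ref{lem:e2} for the lower bound of $n$, and use the expanded form for the matching upper bound. You supply a bit more justification (e.g., why relabeling preserves Chow rank and why all monomials share degree $d$), but the argument is structurally identical to the paper's.
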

\begin{proof}
    Observe that the variables of $P$ can be relabelled to yield $P_m$ without affecting the Chow rank. Then by Lemma \ref{lem:pm}, the Chow rank of $P_m$ is at least the Chow\textendash rank of $P_2$, which is exactly $n$. The trivial Chow decomposition of $P_m$ into $n$ Chow rank one summands matches the lower bound, proving the desired monomial non\textendash overlapping lemma.
\end{proof}

%%%%%%%%%%%%%%%%%%%%%%%%%%%%%%%%%%%%%%%%%%%
\subsection{Applications to Totally Non\textendash Overlapping Polynomials}

We now use the monomial non\textendash overlapping lemma to prove lower bounds on the Chow ranks of two polynomials of interest.

\begin{cor}
    Let $P_{=c}$ denote the additive listing of the set of identically constant functions $S_{=c} \subset {\Z_n}^{\Z_n}$. Then the Chow rank of $P_{=c}$ is exactly $n$.
\end{cor}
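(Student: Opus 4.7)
The plan is to recognize $P_{=c}$ as a polynomial already covered by the monomial non--overlapping lemma established at the end of the preceding subsection; once this is done, the conclusion is immediate.

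First, I would write out $P_{=c}$ explicitly. The set $S_{=c}\subset\Z_n^{\Z_n}$ consists of the $n$ identically constant functions $f_c:i\mapsto c$ indexed by $c\in\Z_n$. The monomial edge listing of $f_c$ is $\prod_{i\in\Z_n}a_{i,c}$, so that the binary additive listing takes the particularly clean form
\[
P_{=c}(A)=\sum_{c\in\Z_n}\prod_{i\in\Z_n}a_{i,c}.
\]
This expression already provides the trivial upper bound $\rho(H)\leq n$ on the Chow rank via the decomposition into $n$ Chow rank one summands, so all the content of the corollary lies in the matching lower bound.

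Next I would verify that $P_{=c}$ satisfies the hypotheses of the monomial non--overlapping lemma. The polynomial has exactly $n$ additive terms, is multilinear, and is homogeneous of total degree $n$, which is at least two whenever $n\geq 2$ (the case $n=1$ being vacuous since then $P_{=c}=a_{0,0}$ has Chow rank $1=n$). The key structural observation is that the monomial supports are totally non--overlapping: the $c$-th summand uses precisely the variables $\{a_{i,c}:i\in\Z_n\}$, i.e.\ the entries of the $c$-th column of $A$, and these column--indexed variable sets are pairwise disjoint as $c$ ranges over $\Z_n$.

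Having matched the hypotheses, I would then invoke the monomial non--overlapping lemma directly to conclude that the Chow rank of $P_{=c}$ equals $n$. There is no genuine obstacle here; the only thing to be careful about is the variable indexing, namely confirming that after the relabelling $a_{i,c}\mapsto x_{nc+i}$ the polynomial $P_{=c}$ becomes exactly the polynomial $P_m$ of the preceding subsection with $m=n$ and all coefficients $\alpha_c=1$, so that the lemma applies verbatim.
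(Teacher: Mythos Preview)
Your proposal is correct and follows essentially the same approach as the paper: write $P_{=c}(A)=\sum_{j\in\Z_n}\prod_{i\in\Z_n}a_{i,j}$, observe that each variable $a_{u,v}$ occurs in exactly one monomial so the support is totally non--overlapping, and invoke the monomial non--overlapping lemma. If anything, you are slightly more thorough than the paper in explicitly noting the trivial upper bound, handling the degenerate case $n=1$, and spelling out the relabelling $a_{i,c}\mapsto x_{nc+i}$ that identifies $P_{=c}$ with the polynomial $P_m$ for $m=n$.
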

\begin{proof}
    First write
    \[
        P_{=c}\lp A\rp=\sum_{j\in\Z_n}\prod_{i\in\Z_n}a_{i,j},
    \]
    and observe that the polynomial is totally non\textendash overlapping. That is, for each $(u,v)\in\Z\times\Z$, the variable $a_{u,v}$ appears in $P_{=c}$ exactly once, when $i=u$ and $j=v$. Thus by the monomial non\textendash overlapping lemma the claim holds.
\end{proof}

\begin{cor}
    Let $P_{C_n}$ denote the additive listing of the subgroup $\text{C}_{n}\cong \nicefrac{\mathbb{Z}}{n\mathbb{Z}} \subset \text{S}_n$ generated by $f_{+1}\in\Z_n^{\Z_n}$ where $f_{+1}(x)=x+1\text{ mod }n$. Then the Chow rank of $P_{C_n}$ is exactly $n$.
\end{cor}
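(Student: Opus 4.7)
The plan is to apply the monomial non-overlapping lemma directly, once we verify that $P_{C_n}$ has totally non-overlapping monomial support. First I would write out $P_{C_n}$ explicitly by observing that the subgroup $C_n \subset S_n$ consists precisely of the cyclic shifts $f_{+k}(x) = x+k \bmod n$ for $k \in \Z_n$, so
\[
    P_{C_n}(A) = \sum_{k \in \Z_n} M_{f_{+k}} = \sum_{k \in \Z_n} \prod_{i \in \Z_n} a_{i,\,i+k \bmod n}.
\]
This is a multilinear polynomial in the $n^2$ variables $\{a_{i,j}\}_{(i,j)\in\Z_n\times\Z_n}$, with exactly $n$ additive terms, each of degree $n$, and it is homogeneous.

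Next I would verify the totally non-overlapping property. Fix any variable $a_{i,j}$. The term indexed by $k$ contains $a_{i,j}$ if and only if $j \equiv i + k \pmod{n}$, i.e. if and only if $k \equiv j - i \pmod{n}$. Since $k$ ranges over $\Z_n$, there is a unique such $k$, so $a_{i,j}$ appears in exactly one of the $n$ additive terms. Thus the monomials supported in $P_{C_n}$ pairwise share no variable, which is precisely the totally non-overlapping condition.

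Finally, since $P_{C_n}$ is a homogeneous multilinear polynomial with $n$ additive terms, totally non-overlapping monomial support, and total degree $n \geq 2$, the monomial non-overlapping lemma applies and gives Chow rank exactly $n$. (For $n = 1$ the group is trivial and the claim is immediate.) I do not anticipate any real obstacle here: once the cyclic structure is unpacked, the disjointness of the diagonals $\{(i,\,i+k \bmod n) : i \in \Z_n\}$ across different $k$ is immediate, and the preceding lemma does all of the work.
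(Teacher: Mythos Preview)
Your proposal is correct and follows essentially the same approach as the paper: write out $P_{C_n}$ as a sum over cyclic shifts, check that each variable $a_{i,j}$ appears in exactly one summand (the one with shift $k\equiv j-i\bmod n$), and invoke the monomial non-overlapping lemma. Your treatment is in fact slightly more careful, since you explicitly note the degree hypothesis $n\ge 2$ and dispose of the trivial case $n=1$.
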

\begin{proof}
    First write
    \[
        P_{\text{C}_n}\lp A\rp=\sum_{\sigma\in\text{C}_{n}}\prod_{i\in\mathbb{Z}_{n}}a_{i,\sigma(i)}=\sum_{j\in\Z_n}\prod_{i\in\Z_n}a_{i,i+j \text{ mod }n},
    \]
    and observe that the polynomial is totally non\textendash overlapping. That is, for each $(u,v)\in\Z\times\Z$, the variable $a_{u,v}$ appears in $P_{\text{C}_n}$ exactly once, when $i=u$ and $j=u+v \text{ mod }n$. Thus by the monomial non\textendash overlapping lemma the claim holds.
\end{proof}

% {\color{red} Yet another application of the Sawczuk Lemma

% \begin{cor}
%     Let $P\lp x_0,\cdots,x_{n-1} \rp$ be the polynomial given by
%     \[
%     P\lp x_{0},\cdots,x_{n-1}\rp=\sum_{f:\mathbb{Z}_{n}\rightarrow\mathbb{Z}_{n}}\,\prod_{i\in\mathbb{Z}_{n}}\left(x_{f\left(i\right)}\right)^{i}
%     \]
%     then the Chow-rank of $P$ is upper-bounded by $\left\lceil \frac{n}{2}\right\rceil^{n-1}$
% \end{cor}
% \begin{proof}
%     First write
%     \[
%         P\lp x_{0},\cdots,x_{n-1}\rp=n\prod_{i\in\mathbb{Z}_{n}\backslash\left\{ 0\right\} }\left(\sum_{j\in\mathbb{Z}_{n}}\left(x_{j}\right)^{i}\right)
%     \]
%     The rests of the proof is straight forward from here using the pairing trick we can actually express a homegeneous Chow-decomposition. Is $\left\lceil \frac{n}{2}\right\rceil^{n-1}$ the actual Chow\textendash rank ?
% \end{proof}

% }

%%%%%%%%%%%%%%%%%%%%%%%%%%%%%%%%%%%%%%%%%%%
\section*{Acknowledgments}
This work was supported in part by the United States Office of Naval Research, Code 321.\\
We also thank Joshua Grochow, Harm Derksen, Joseph
Landsberg, Matthew Santana, and Michael Williams for enlightening conversations.

%%%%%%%%%%%%%%%%%%%%%%%%%%%%%%%%%%%%%%%%%%%
\bibliographystyle{plain}
\bibliography{main.bib}

\end{document}